  \def\bigcup{\bigcupprod}
  \def\bigcupdisjoint{\mathop{\kern10pt\raisebox{4pt}{$\cdot$}\kern-12pt\bigcup}\limits}
\numberwithin{equation}{section}
\newtheoremstyle{ttheorem}%
       {1.8ex\@plus1ex}                
       {2.1ex\@plus1ex\@minus.5ex}      
       {\itshape}           
       {0pt}                   
       {\bfseries}          
       {.}                  
       {.5em}               
       {}                
\newtheoremstyle{ddefinition}%
       {1.8ex\@plus1ex}                
       {2.1ex\@plus1ex\@minus.5ex}      
       {}           
       {0pt}                   
       {\bfseries}           
       {.}                  
       {.5em}               
       {}                
\newtheoremstyle{rremark}%
       {1.8ex\@plus1ex}                
       {2.1ex\@plus1ex\@minus.5ex}      
       {\normalfont}        
       {0pt}                   
       {\bfseries}           
       {.}                  
       {.5em}               
       {}                   
\theoremstyle{ttheorem}
\newtheorem{theorem}{Theorem}[section]
\newtheorem{lemma}[theorem]{Lemma}
\newtheorem{proposition}[theorem]{Proposition}
\newtheorem{corollary}[theorem]{Corollary}
\theoremstyle{ddefinition}
\theoremstyle{rremark}
\newtheorem{remark}[theorem]{Remark}
\newtheorem{myremarks}[theorem]{Remarks}
\newtheorem{myexamples}[theorem]{Examples}
\newcounter{numcount}
\newcommand{\labelnummer}{(\roman{numcount})}%
\providecommand{\showkeyslabelformat}[1]{\relax}        
\let\mysaveformat\showkeyslabelformat                   %
\def\myformat#1{\raisebox{-1.5ex}{\mysaveformat{#1}}}   %
\newenvironment{nummer}%
  {\let\curlabelspeicher\@currentlabel%
    \begin{list}{\textup{\labelnummer}}%
      {\usecounter{numcount}\leftmargin0pt%
        \topsep0.5ex\partopsep2ex\parsep0pt\itemsep0ex\@plus1\p@%
        \labelwidth2.5em\itemindent3.5em\labelsep1em%
      }%
    \let\saveitem\item%
    \def\item{\saveitem%
      \def\@currentlabel{\curlabelspeicher\kern.1em\labelnummer}}%
    \let\savelabel\label%
    \def\label##1{{\ifnum\thenumcount=1\let\showkeyslabelformat\myformat\fi\savelabel{##1}}%
										{\def\@currentlabel{\labelnummer}%
									 	\let\showkeyslabelformat\@gobble
									 	\savelabel{##1item}%
										}%
	   							}%
  }{\end{list}}%
\let\curlabelspeicher\@currentlabel%
    \let\saveitem\item%
    \def\item{\saveitem%
      \def\@currentlabel{\curlabelspeicher\kern.1em\labelnummer}}%
    \let\savelabel\label%
    \def\label##1{{\ifnum\thenumcount=1\let\showkeyslabelformat\myformat\fi\savelabel{##1}}%
										{\def\@currentlabel{\labelnummer}%
									 	\let\showkeyslabelformat\@gobble
									 	\savelabel{##1item}%
										}%
    							}%
\def\section{\@startsection{section}{1}%
  \z@{1.3\linespacing\@plus\linespacing}{.5\linespacing}%
  {\normalfont\bfseries\centering}}
\def\subsection{\@startsection{subsection}{2}%
  \z@{.8\linespacing\@plus.5\linespacing}{-1em}%
  {\normalfont\bfseries}}
\def\nlsubsection{\@startsection{subsection}{2}%
  \z@{.8\linespacing\@plus.5\linespacing}{.1ex}%
  {\normalfont\bfseries}}
\let\@afterindenttrue\@afterindentfalse%
\renewenvironment{proof}[1][\proofname]{\par \normalfont
  \topsep6\p@\@plus6\p@ \trivlist 
  \item[\hskip\labelsep\scshape
    #1\@addpunct{.}]\ignorespaces
}{%
  \qed\endtrivlist
}
\def\ps@firstpage{\ps@plain
  \def\@oddfoot{\normalfont\scriptsize \hfil\thepage\hfil
     \global\topskip\normaltopskip}%
  \let\@evenfoot\@oddfoot
  \def\@oddhead{
    \begin{minipage}{\textwidth}
      \normalfont\scriptsize
      \emph{\insertfirsthead}
    \end{minipage}}
  \let\@evenhead\@oddhead 
}
\def\insertfirsthead{}
\def\@cite#1#2{{%
 \m@th\upshape\mdseries[{#1}{\if@tempswa, #2\fi}]}}
\newcommand{\N}{\mathbb{N}}
\newcommand{\R}{\mathbb{R}}
\renewcommand{\le}{\leqslant}
\renewcommand{\ge}{\geqslant}
\providecommand{\bigcupdisjoint}{\mathop{\kern7pt\raisebox{6pt}{$\cdot$}\kern-9.5pt\bigcup}\limits}
\providecommand{\norm}[1]{\lVert#1\rVert}
\providecommand{\bigparens}[1]{\bigl(#1\bigr)}
\providecommand{\Bigparens}[1]{\Bigl(#1\Bigr)}
\newcommand{\Oh}{\mathrm{O}}
\newcommand{\oh}{\mathrm{o}}
\newcommand{\upd}{\mathrm{d}}
\renewcommand{\d}{\upd}   
\newcommand{\dx}{\d x}
\let\textdef\textit
\newcommand{\hairspace}{\kern .04167em}
\def\clap#1{\hbox to 0pt{\hss#1\hss}}
\def\bra{\makeatletter\@ifstar\@bra\@@bra}
\def\@bra#1{\hairspace #1\>}
\def\@@bra#1{\lvert\@bra{#1}}
\def\ket{\makeatletter\@ifstar\@ket\@@ket}
\def\@ket#1{\<#1\hairspace}
\def\@@ket#1{\@ket{#1}\rvert}
\begin{document}

\title{Finite-size energy of non-interacting Fermi gases}

\author[M.\ Gebert]{Martin Gebert}

\address{Mathematisches Institut,
  Ludwig-Maximilians-Universit\"at M\"unchen,
  Theresienstra\ss{e} 39,
  80333 M\"unchen, Germany}

\email{gebert@math.lmu.de}

\thanks{Work supported by SFB/TR 12 of the German Research Council (DFG)}

\begin{abstract}
We study the asymptotics of the difference of the ground-state energies of 
two non-interacting $N$-particle Fermi gases in a finite volume of length $L$ 
in the thermodynamic limit up to order $1/L$. We are particularly interested in 
subdominant terms proportional to $1/L$, called finite-size energy. 
In the nineties Affleck and co-authors 
\cite{affleck1997boundary,zagoskin1997fermi}
claimed that 
the finite-size energy is
related to
the decay exponent 
occurring in Anderson's orthogonality. 
We prove that the finite-size energy depends on the details of the thermodynamic limit
and is therefore non-universal. Typically, it includes
an additional linear term in the scattering phase shift.
\end{abstract}

\maketitle

\section{Introduction}
Given two non-interacting $N$-particle Fermi gases, which differ by a local scattering potential, and are confined to the finite interval $(0,L)\subset (0,\infty)$, 
one can ask for two intimately connected asymptotics. 
The first one is the asymptotics of the scalar product of the two ground states $\boldsymbol\langle \Phi^N_{L},\Psi^N_{L}\boldsymbol\rangle$,
which we call the ground-state overlap in the sequel.
The second related question is
 the asymptotics of the difference of the ground-state energies $E'^N_{L}-E_{L}^N$. Both in the thermodynamic limit at some given Fermi energy $E$, 
i.e. $N/L\to\rho(E)>0$. Here, $ \rho$ is the integrated density of states of the unperturbed one-particle Schr\"odinger operator. 
These asymptotics are related to physical situations where a sudden change by a static scattering potential occurs, e.g. the Fermi edge singularity or the Kondo effect,
see \cite{affleck1994fermi}.

On the one hand, \cite{PhysRev.164.352} claims in the case of a Dirac-$\delta$ perturbation
that the ground-state overlap vanishes as
\begin{equation}\label{Anderson}
\boldsymbol\langle \Phi^N_{L},\Psi^N_{L}\boldsymbol\rangle \sim L^{-\zeta(E)/2},
\end{equation}
where 
\begin{equation}\label{123}
  \zeta(E)
    :=
    \frac{1}{\pi^2}\delta^2(\sqrt E)
\end{equation}
and $\delta$ is equal to the s-wave scattering phase shift. 
For a proof of this see \cite{Geb}. It turns out that 
the decay exponent $\zeta(E)$ is independent of the particular thermodynamic limit chosen, at least in  
the case of a Dirac-$\delta$ perturbation. 
In more general settings only upper bounds on the ground-state overlap are known, see \cite{KuOtSp13, GKM, GKM2, FrankPush, magnetic}.
In the physics literature the behaviour \eqref{Anderson} is referred to as Anderson's orthogonality catastrophe. 

On the other hand, restricting ourselves to systems on the half-axis and the family of thermodynamic limits
\begin{equation}\label{11}
 \frac N L+ \frac a L=\rho(E),
\end{equation}
where $a\in\R$ is a parameter, 
the difference of the ground-state energies admits the asymptotics
\begin{equation}\label{equation8}
 E'^N_L-E_{L}^N = \int_{-\infty}^{E}\d x\, \xi(x) +\frac {\sqrt E\pi} L x^a_{FS}(E) + \oh\Bigparens{\frac 1 L}
\end{equation}
as $N,L\to\infty$ such that \eqref{11} holds.
Here, $\xi$ is the spectral shift function for the pair of the corresponding infinite-volume one-particle Schr\"odinger operators.
In the physics literature the first term is sometimes called the Fumi term and $x^a_{FS}(E)$ 
the finite-size correction or energy, see \cite{affleck1997boundary}.
 For models on the half line with a local perturbation,
the finite-size correction $x^a_{FS}(E)$ appearing in the energy difference is claimed to be closely related to the decay exponent $\zeta(E)$ 
 occurring in Anderson's orthogonality, 
see  \cite{affleck1997boundary,affleck1994fermi,zagoskin1997fermi}. 

 In this note we give a short and elementary proof of the correct asymptotics 
 of the difference of the ground-state energies for systems on the half axis which differ by a short-range scattering potential in 
 the thermodynamic limit, see Theorem \ref{Thm:finite-size}. The proof also applies for a perturbation by a Dirac-$\delta$ perturbation. 
 It turns out that the finite-size energy $x^a_{\text{FS}}(E)$ 
 is non-universal and depends 
 on the particular choice of the parameter $a$ in the thermodynamic limit in \eqref{11}.
  Moreover, there is precisely one choice of the particle number and system size, i.e. $a= 1/2$ in \eqref{11}, such that the finite-size energy 
is equal to the Anderson exponent \eqref{123}. This particular choice was also used in a computation of the finite-size energy in \cite[App. A]{zagoskin1997fermi}.
However, for other choices of the thermodynamic limit
 an additional linear term in the  spectral shift function, or equivalently in the scattering phase shift occurs,
see Corollary \ref{Cor:finite-size} below. In contrast, it is proved in \cite{Geb} that the decay exponent $\zeta(E)$ in \eqref{Anderson} is independent of the choice of the constant $a$ in the thermodynamic limit \eqref{11}. Thus, we doubt a fundamental connection between the finite-size energy \eqref{equation8} and the 
decay exponent in Anderson's orthogonality \eqref{Anderson}.

\section{Model and results}

We consider a measurable non-negative potential $V\geq 0$ on the half line $(0,\infty)$ satisfying
\begin{equation}\label{assmp:V}
 \int_0^\infty \d x\, V(x)\left(1+x^2\right)<\infty.
\end{equation}
Moreover, let $L>1$ and $-\Delta_L$ be the negative Laplacian on the interval $(0,L)$ with Dirichlet boundary conditions. 
Then, we define the finite-volume one-particle Schr\"odinger
operators 
\begin{equation}
H_L := -\Delta_L \qquad\text{and}\qquad H_L' := -\Delta_L + V.
\end{equation}
Here, $V$ is understood as its canonical restriction to the interval $(0,L)$. 
These operators are densely defined and self-adjoint operators on the Hilbert 
space $L^2((0,L))$. Both have compact resolvents and thus admit an ONB of eigenfunctions. We denote
the corresponding non-decreasing sequences of eigenvalues, counting multiplicities, by
 $\lambda_1^L\le\lambda_2^L\le\dotsb$ and $\mu_1^L\le\mu_2^L\le\dotsb$. 
 Note that $\lambda_n^L=\left(\frac{n\pi}{L}\right)^2$, $n\in\N$, see e.g. \cite{MR0493421}. 
 Moreover, we write $H:=-\Delta$ and $H':=-\Delta+V$ for the corresponding infinite-volume operators on $L^2((0,\infty))$
 with Dirichlet boundary conditions at the origin.

 Given $N\in\N$, the induced (non-interacting)
finite-volume fermionic $N$-particle Schr\"odinger operators $\hat{H}_L$ and
$\hat{H}_L'$ act on the totally antisymmetric subspace
$\bigwedge_{j=1}^N L^2((0,L))$ of the $N$-fold tensor product space
and are given by
\begin{equation}
  \hat{H}_L^{(\prime)}
  :=
  \sum_{j=1}^N
  I\otimes\dotsm\otimes I \otimes H_L^{(\prime)} \otimes I\otimes\dotsm \otimes I,
\end{equation}
where the index $j$ determines the position of $H_L^{(\prime)}$ in the
$N$-fold tensor product of operators. 
The corresponding ground-state energies 
are given by the sum of the $N$ smallest eigenvalues
\begin{equation}
  E^N_L
  :=
  \sum_{k=1}^N \lambda^L_k
  \quad\text{and}\quad
  E'^{N}_L :=
  \sum_{j=1}^N \mu^L_j.
\end{equation}
We are interested in the difference of the ground state energies in the thermodynamic limit 
 at a given \textdef{Fermi energy} $E>0$. Thus, given $E>0$ and the number of particles $N\in\N$, 
we choose the system length $L$ such that
\begin{equation}\label{def.thermo}
\frac{N}{L}\to\frac{\sqrt E}{\pi}=\rho(E),
\end{equation}
where
 $\rho$ is the integrated density of states of the infinite-volume operator $H$.

For $k>0$ we denote by $\delta(k)$ the scattering phase shift corresponding to the pair of operators $H$ and $H'$
at the energy $k^2>0$.
Since $V\geq 0$, the phase shift is non-positive, i.e. for $k>0$
\begin{equation}
\delta(k) \leq 0.
\end{equation}
Then, the scattering matrix for the pair $H$ and $H'$ at the energy $E$ equals 
$S(E)=\exp{\big(2i\delta(\sqrt{E})\big)}$. Note that on the half line, the scattering matrix is just a complex number of modulus $1$.
For a definition of the phase shift see e.g. Appendix \ref{finite-size6}, \cite[Chapter. XI.8]{MR529429} or \cite{calogero1967variable}. 

\begin{remark}\label{Remark}
 \begin{enumerate}
  \item[(i)] Let $\xi$ be the spectral shift function for the pair of operators $H$ and $H'$. Then, we have 
        the identity \cite{zbMATH00232645}
	\begin{equation}
	\displaystyle\frac 1 \pi \delta(\sqrt E)= -\xi(E),
	\end{equation}
	for every $E>0$. 
  \item[(ii)]  We define for $E>0$
  \begin{equation}
   \zeta(E) :=\frac 1 {\pi^2}\delta^2(\sqrt E).
  \end{equation} 
    This constant
	       equals the decay exponent found in \cite{Geb} which determines the asymptotics of the exponent in Anderson's orthogonality, i.e.
	       the asymptotics \eqref{Anderson} of the scalar product of the ground states of 
	       the pair of operators $\hat{H}_L$ and $\hat{H}_L^{(\prime)}$ in the thermodynamic limit.
 \end{enumerate}
\end{remark}

Using the notation of Remark \ref{Remark}, our main 
 result is the following:

\begin{theorem}\label{Thm:finite-size}
For all Fermi energies $E>0$ the difference of the ground-state energies admits the asymptotics
\begin{align}
 E'^{N}_{L} -E^{N}_{L}&=-\frac 1 \pi \int_{0}^{\left(\frac {N\pi} L\right)^2} \d x\, \delta(\sqrt x)+
   \frac { \sqrt E} {L} \Big(-\delta(\sqrt E)+\frac 1 \pi \delta^2(\sqrt E) \Big)+ \oh\Bigparens{\frac 1 {L} }\notag\\
   &=\int_{0}^{E} \d x\,\xi(x) + \int_{E}^{\left(\frac {N\pi} L\right)^2} \d x\,\xi(x) +
    \frac { \sqrt E\pi} {L} \big(\xi(E) + \zeta(E) \big) +\oh\Bigparens{\frac 1 {L} }\label{111}
\end{align}
as $N, L\to\infty$, and $\frac N L \to\frac{\sqrt E} \pi$.  
\end{theorem}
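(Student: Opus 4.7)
The plan is to write $E'^N_L - E^N_L = \sum_{j=1}^N (\mu_j^L - \lambda_j^L)$ and obtain a two-term expansion of each $\mu_j^L$ via a Pr\"ufer-type phase quantisation. Since $\lambda_j^L = (j\pi/L)^2$, setting $p_j := j\pi/L$ and $k_j := \sqrt{\mu_j^L}$, I would first show that, uniformly in $1\le j\le N$,
\begin{equation}
 k_j L + \delta(k_j) = j\pi + r_{j,L}, \qquad r_{j,L} = \Oh(1/L),
\end{equation}
by comparing the Pr\"ufer angle of the Dirichlet solution of $-u''+Vu = k^2 u$, $u(0)=0$, at $x=L$ with its limit $kx+\delta(k)$ as $x\to\infty$. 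The moment condition \eqref{assmp:V} is exactly what makes this remainder integrable and controls the $L$-dependence, while $V\ge 0$ removes bound states (so the index matches and $\delta(0)=0$ by Levinson's theorem, with $\delta(k)=\Oh(k)$ near zero).

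Inverting this implicit equation gives the asymptotic expansion
\begin{equation}
 k_j = p_j - \frac{\delta(p_j)}{L} + \Oh\Bigparens{\frac{1}{L^2}},
\end{equation}
again uniformly in $j\le N$. Squaring yields
\begin{equation}
 \mu_j^L - \lambda_j^L
   = -\frac{2 p_j\, \delta(p_j)}{L}
     + \frac{\delta(p_j)^2 + 2 p_j\, \delta(p_j)\,\delta'(p_j)}{L^2}
     + \Oh\Bigparens{\frac{1}{L^3}} .
\end{equation}

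Next, I sum on $j$ and apply Euler--Maclaurin. For the leading sum I use the boundary-corrected Riemann approximation
\begin{equation}
 \frac{\pi}{L}\sum_{j=1}^{N} f(p_j)
   = \int_{0}^{p_N} f(p)\,\dx p
     + \frac{\pi}{2L}\bigparens{f(p_N)-f(0)}
     + \Oh\Bigparens{\frac{1}{L^2}}
\end{equation}
with $f(p)=p\,\delta(p)$, producing the Fumi integral $-\frac{2}{\pi}\int_0^{p_N} p\,\delta(p)\,\dx p$ plus a boundary contribution $-p_N\delta(p_N)/L$. A change of variable $y=p^2$ turns the integral into $-\tfrac{1}{\pi}\int_0^{(N\pi/L)^2}\delta(\sqrt{y})\,\dx y$. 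For the second-order sum, the integrand is a total derivative,
\begin{equation}
 \delta(p)^2 + 2p\,\delta(p)\,\delta'(p) = \dbyd{p}\bigparens{p\,\delta(p)^2},
\end{equation}
so the leading Riemann approximation collapses to $p_N\delta(p_N)^2/(L\pi)$. Passing to the limit $p_N \to \sqrt{E}$ assembles the boundary terms into $\frac{\sqrt{E}}{L}\bigparens{-\delta(\sqrt{E})+\tfrac{1}{\pi}\delta(\sqrt{E})^2}$, matching the stated formula. The second form in \eqref{111} is then just Remark \ref{Remark}(i).

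The main obstacle is the uniform control of $r_{j,L}$ and of the Riemann-sum remainders down to the lowest modes, where $p_j=\pi j/L$ is close to zero and both $\delta$ and $\delta'$ must be handled carefully. Here I would exploit the low-energy behaviour $\delta(k)=\Oh(k)$ (no zero-energy resonance since $V\ge 0$ and $\int V(1+x^2)\,\dx x<\infty$) to ensure that the small-$p$ contributions are subdominant, and a Jost-function representation under \eqref{assmp:V} to obtain $C^1$ bounds on $\delta$ away from the origin. Once these uniform bounds are in place, the remaining steps are routine Euler--Maclaurin bookkeeping. For a pure Dirac-$\delta$ perturbation, the Pr\"ufer quantisation $kL+\delta(k)=j\pi$ actually becomes exact for all $j$, and the proof simplifies considerably.
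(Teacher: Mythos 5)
Your argument is essentially the paper's own proof: the Pr\"ufer phase quantisation is Lemma \ref{Le:finite-size1}, the inversion $k_j=p_j-\delta(p_j)/L+\delta(p_j)\delta'(p_j)/L^2+\dotsb$ is Lemma \ref{Le:finite-size3}\,(iv), and the summation via boundary-corrected Euler--Maclaurin with the low-energy bounds on $\delta,\delta',\delta''$ supplying uniformity near $k=0$ is exactly Lemmas \ref{Le:finite-size3}\,(i)--(iii) and \ref{Le:finite-size2}; your observation that $\delta(p)^2+2p\,\delta(p)\delta'(p)=\frac{\d}{\d p}\bigparens{p\,\delta(p)^2}$ merely combines into one step the two second-order sums \eqref{eq:finite-size2}--\eqref{eq:finite-size3}, whose integral parts the paper cancels against each other. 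The one point you must tighten is the order of the quantisation remainder: $r_{j,L}$ has to be $\oh(1/L)$ uniformly in $j$ (equivalently, the error in $k_j$ must be $\oh(1/L^2)$), not merely $\Oh(1/L)$, because a per-mode error of size $1/L^2$ sums over $N\sim L$ modes to size $1/L$ and would swamp the finite-size term --- this is precisely what the second-moment condition \eqref{assmp:V} delivers in the paper's proof of Lemma \ref{Le:finite-size1}, and your later $\Oh(1/L^3)$ per-term error already presupposes it.
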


\begin{remark}\label{remark3}
Since $\xi$ is continuous, see Lemma \ref{Le:finite-size3} below, 
\begin{equation}
 \int_{E}^{\left(\frac {N\pi} L\right)^2} \d x\,\xi(x)= \bigg(\Big(\frac {N\pi} L\Big)^2 -E\bigg)\xi(E)+ 
 \oh\Big(\Big(\frac {N\pi} L\Big)^2 -E\Big)
\end{equation}
as $N,L\to\infty$, and $\frac N L \to \frac{\sqrt E}\pi>0$.
This immediately implies that the asymptotics depends on the rate of convergence of the thermodynamic limit
and that the finite-size energy defined in \eqref{equation8} is non-universal. In general, the first-order correction to the difference of the ground-state energies may even be $L$ dependent. 
\end{remark}
Remark \ref{remark3} implies for the particular family of
 thermodynamic limits considered in the introduction:

 \begin{corollary}[Finite-size energy]\label{Cor:finite-size}
    For a given Fermi energy $E>0$, some particle number $N\in\N$ and $a\in\R$ 
    we choose the system length $L$ such that
\begin{equation}\label{def:thermo}
 \frac{N+a}{L}:=\frac{\sqrt E} \pi.
\end{equation}
    Then, the $1/L$-correction in \eqref{111}, which is called the finite-size energy introduced in \eqref{equation8}, is  
    \begin{equation}
     x^a_{FS}(E)=   \big(1-2a\big)\xi(E)+ \zeta(E). 
    \end{equation}
Thus, 
\begin{enumerate}
 \item [(i)] for the particular choice $a=\frac 1 2$ the finite-size energy is
	      \begin{equation}
	       x_{FS}(E)=  \zeta(E),
	      \end{equation}
 \item[(ii)] whereas for the choice $a=0$ the finite-size energy equals
	      \begin{equation}
	       x_{FS}(E)= \xi(E)+\zeta(E).
	      \end{equation}
\end{enumerate}
 \end{corollary}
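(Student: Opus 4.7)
The plan is simply to substitute the hypothesis \eqref{def:thermo} into the already-proved asymptotic expansion \eqref{111} of Theorem \ref{Thm:finite-size} and read off the coefficient of $1/L$. The only genuine step is a one-term Taylor expansion of the upper endpoint $(N\pi/L)^2$ of the middle integral in \eqref{111} around the Fermi energy $E$, with error $\oh(1/L)$.

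To this end I would first rewrite \eqref{def:thermo} as $N\pi/L = \sqrt{E} - a\pi/L$, so that squaring yields
\[
\left(\frac{N\pi}{L}\right)^2 - E = -\frac{2a\pi\sqrt{E}}{L} + \frac{a^2\pi^2}{L^2} = -\frac{2a\pi\sqrt{E}}{L} + \oh\Bigparens{\frac{1}{L}}.
\]
By Remark \ref{remark3}, which itself rests on the continuity of $\xi$ supplied by Lemma \ref{Le:finite-size3}, the middle integral in \eqref{111} therefore satisfies
\[
\int_{E}^{(N\pi/L)^2} \xi(x)\, \d x = -\frac{2a\pi\sqrt{E}}{L}\,\xi(E) + \oh\Bigparens{\frac{1}{L}}.
\]

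Substituting this back into \eqref{111} and collecting the $1/L$-contributions produces
\[
E'^{N}_L - E^{N}_L = \int_{0}^{E} \xi(x)\, \d x + \frac{\sqrt{E}\,\pi}{L}\bigl((1-2a)\xi(E) + \zeta(E)\bigr) + \oh\Bigparens{\frac{1}{L}}.
\]
Comparing this with the defining identity \eqref{equation8}, where the Fumi term $\int_{-\infty}^{E}\xi = \int_{0}^{E}\xi$ since $\xi$ vanishes on the negative half-axis, immediately identifies $x^a_{FS}(E) = (1-2a)\xi(E) + \zeta(E)$. The listed specializations $a = 1/2$ and $a = 0$ then follow by direct substitution.

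I do not expect any genuine obstacle: all the analytic work has been invested in Theorem \ref{Thm:finite-size} and in the continuity statement behind Remark \ref{remark3}, and from that point onward the corollary is a routine algebraic manipulation coming from the explicit $1/L$-expansion of $(N\pi/L)^2$ around $E$ under \eqref{def:thermo}.
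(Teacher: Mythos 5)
Your proposal is correct and is essentially the argument the paper intends: the corollary is stated as a direct consequence of Remark \ref{remark3}, i.e.\ one Taylor-expands the upper limit $(N\pi/L)^2 = E - 2a\pi\sqrt{E}/L + \Oh(1/L^2)$ under \eqref{def:thermo}, uses the continuity of $\xi$ to evaluate the middle integral in \eqref{111} as $-\tfrac{2a\pi\sqrt{E}}{L}\xi(E)+\oh(1/L)$, and collects the $1/L$-coefficients to read off $x^a_{FS}(E)=(1-2a)\xi(E)+\zeta(E)$. Your computation matches this exactly, so there is nothing to add.
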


\begin{remark}\label{remark2}
\begin{enumerate}
  \item [(i)]  In our case of $V\geq 0$ the integrals in Theorem \ref{Thm:finite-size} may start from $0$, since $\delta(x)=0$ for $x\leq 0$. 
  \item [(ii)]The first term in the expansion is not surprising since
	      \begin{align}
	       E^{'N}_{L} -E^{N}_{L}=\int_{-\infty}^E \dx\, \xi_L(x) + \oh\left(1\right),
	      \end{align}
	      where $\xi_L$ is the finite-dimensional spectral shift function and 
	      \begin{equation}
	       \int_{-\infty}^E\d x\,\xi_L(x)\to \int_{-\infty}^E\d x\, \xi(x)\
	      \end{equation}	 
	      as $L\to\infty$,
	      see \cite{MR2596053} or \cite{MR2892556} 
	      for definitions and details. 
  \item [(iii)] The same result with the completely analogous proof holds also for a Dirac-$\delta$ perturbation or
                s-wave scattering in three dimensions which is considered in \cite{Geb}. In the special case of the Neumann and Dirichlet Laplacian $H:=-\Delta^N$ 
		and $H':=-\Delta^D$ on $L^2((0,\infty))$ 
		the proof is even simpler since the phase shift is energy independent 
		\begin{equation}
		\delta(\sqrt E)=\frac \pi 2
		\end{equation}
		and one easily obtains the $a$-dependence in Corollary \ref{Cor:finite-size}.  
  \item [(iv)]  We choose $V\geq0 $ since we want to avoid bound states or zero-energy resonances. 
		Moreover, the integrability assumption \eqref{assmp:V} on $V$ ensures sufficient regularity of the phase shift $\delta$. 
  \item [(v)]  Our result allows also a conclusion for the same problem on $\R$ with a symmetric perturbation $V$ because in this case
		the problem is reduced to two problems on the half axis
                with either Neumann or Dirichlet boundary condition at the origin.
      
\end{enumerate}
\end{remark}

\section{Proof of Theorem \ref{Thm:finite-size}}

We start with a lemma relating the eigenvalues of 
the pair of finite-volume operators.
 
\begin{lemma}\label{Le:finite-size1}
Let $\delta$ be the phase shift for the pair of operators $H$ and $H'$ then the $n$th eigenvalues of $H_L$ and $H'_L$ satisfy 
   \begin{equation}
   \sqrt{\mu_n}=\sqrt{\lambda_n}- \frac{\delta(\sqrt{\mu_n})} L+\oh\Bigparens{\frac 1 {L^2}}, 
   \end{equation}
   where the error depends only on the potential $V$.
\end{lemma}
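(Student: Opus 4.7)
The plan is to identify the $n$th Dirichlet eigenvalue $\mu_n$ with a zero at $x=L$ of the regular solution $\phi_k$ on $(0,\infty)$ of $(H'-k^2)\phi_k=0$, and then to exploit the asymptotic shape of this solution, which is precisely what the phase shift $\delta(k)$ encodes.

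Concretely, let $\phi_k$ denote the solution of $-\phi_k''+V\phi_k=k^2\phi_k$ on $(0,\infty)$ with $\phi_k(0)=0$ and $\phi_k'(0)=k$; it satisfies the Volterra integral equation
$$\phi_k(x)=\sin(kx)+\frac{1}{k}\int_0^x \sin\bigl(k(x-y)\bigr)V(y)\phi_k(y)\,\dy.$$
By Sturm oscillation, $\mu_n$ is characterised as the $n$th positive value of $k^2$ for which $\phi_k(L)=0$. Decomposing $\sin(k(x-y))$ into the two products $\sin(kx)\cos(ky)$ and $\cos(kx)\sin(ky)$, splitting the $y$-integral as $\int_0^\infty -\int_x^\infty$, and using that $\phi_k$ is bounded on $[0,\infty)$ uniformly for $k$ in compact subsets of $(0,\infty)$ (a consequence of iterating the Volterra equation under \eqref{assmp:V}), one obtains the standard asymptotic form
$$\phi_k(x)=A(k)\sin\bigl(kx+\delta(k)\bigr)+R(k,x),$$
where $A(k)>0$, $\delta(k)$ is the scattering phase shift of the pair $(H,H')$ defined in Appendix \ref{finite-size6}, and $|R(k,x)|\le C(k)\int_x^\infty V(y)\,\dy$ with $C(k)$ locally bounded on $(0,\infty)$.

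Setting $x=L$ and using $\phi_{\sqrt{\mu_n}}(L)=0$ yields
$$A(\sqrt{\mu_n})\sin\bigl(\sqrt{\mu_n}\,L+\delta(\sqrt{\mu_n})\bigr)=-R(\sqrt{\mu_n},L).$$
In the thermodynamic limit $\sqrt{\mu_n}\to\sqrt{E}>0$ by Weyl asymptotics, so $\sqrt{\mu_n}$ stays in a compact subset of $(0,\infty)$ on which $A(\cdot)$ is bounded below. Chebyshev's inequality and the moment assumption \eqref{assmp:V} give $\int_L^\infty V(y)\,\dy\le L^{-2}\int_L^\infty y^2 V(y)\,\dy=\oh(L^{-2})$, hence $R(\sqrt{\mu_n},L)=\oh(L^{-2})$. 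Since $\sqrt{\mu_n}\,L$ is within $\Oh(1)$ of $n\pi$ by Weyl's counting estimate, the only admissible branch is $\sqrt{\mu_n}\,L+\delta(\sqrt{\mu_n})=n\pi+\oh(L^{-2})$; dividing by $L$ and recalling $\sqrt{\lambda_n}=n\pi/L$ delivers the claim.

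The main obstacle is the uniform control of the remainder $R(k,L)$ in $k$. The two ingredients needed are the uniform $L^\infty$-boundedness of $\phi_k$ on $[0,\infty)$ for $k$ in a compact subset of $(0,\infty)$, which is the content of the classical Jost-type analysis of the Volterra equation, and the tail estimate $\int_L^\infty V(y)\,\dy=\oh(L^{-2})$, which is exactly what the strengthened second-moment assumption \eqref{assmp:V} is tailored to provide. Both ingredients depend only on $V$, so the error term in the lemma is uniform as claimed.
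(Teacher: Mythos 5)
Your strategy---characterising $\mu_n$ by $\phi_{\sqrt{\mu_n}}(L)=0$ and reading off the asymptotic phase of the regular solution at $x=L$---is the integral-equation counterpart of the paper's argument, which runs the same phase matching through the Pr\"ufer angle: there $\theta_{\sqrt{\mu_n}}(L)=n\pi$, and integrating the variable-phase ODE gives $\sqrt{\mu_n}\,L=n\pi+\tfrac{1}{\sqrt{\mu_n}}\int_0^LV\sin^2(\theta_{\sqrt{\mu_n}})$, so that only the tail $\int_L^\infty$ has to be controlled. The idea is therefore sound, and as a bonus the Pr\"ufer route settles your branch question exactly (oscillation theory forces the integer to be $n$, which your appeal to an $\Oh(1)$ Weyl estimate does not by itself do, since $\Oh(1)$ closeness to $n\pi$ need not single out one multiple of $\pi$).

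The genuine gap is uniformity in $n$. The lemma asserts an error depending only on $V$, and in the proof of Theorem \ref{Thm:finite-size} it is applied to \emph{every} $n=1,\dots,N$; the low-lying eigenvalues cannot be discarded, since $\sum_{n\le \epsilon L}(\mu_n-\lambda_n)$ is of order one. For $n=\oh(L)$ one has $k=\sqrt{\mu_n}\approx n\pi/L\to 0$, so your premise that ``$\sqrt{\mu_n}$ stays in a compact subset of $(0,\infty)$'' fails, and your remainder bound $|R(k,x)|\le C(k)\int_x^\infty V(y)\,\d y$ degenerates exactly there: the Volterra kernel contributes a factor $1/k$, and ``$C(k)$ locally bounded on $(0,\infty)$'' gives no control as $k\downarrow 0$. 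The repair is the same device the paper uses: instead of bounding $|\phi_k(y)|$ by a constant and paying $1/k$, use $|\phi_k(y)|\le C\min\{ky,1\}$ (in the Pr\"ufer picture, $\sin^2(\theta_k(t))\le\min\{1,(kt)^2\}$, whence $\sin^2(\theta_k(t))/k\le t$), so that
\begin{equation*}
\frac 1 k\int_L^\infty \d y\, V(y)\,|\phi_k(y)|\;\le\; C\int_L^\infty \d y\, y\,V(y)\;\le\;\frac CL\int_L^\infty \d y\, y^2V(y)\;=\;\oh\Bigparens{\frac 1L}
\end{equation*}
uniformly in $k>0$; a phase error $\oh(1/L)$ suffices after dividing by $L$. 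You would also need $A(k)$ bounded below down to $k=0$ (no zero-energy resonance), which holds here because $V\ge 0$. With these two points supplied, your proof goes through and is a legitimate alternative to the paper's.
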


This follows directly from introducing Pr\"ufer variables in the theory of Sturm-Liouville operators. 

We have to investigate the behaviour of $\delta$ at $k=0$ to obtain suitable error estimates.

\begin{lemma}\label{Le:finite-size3}
Let $\delta$ be the phase shift corresponding to the operators $H$ and $H'$. Then,
$\delta\in C^2((0,\infty))$ and there exists a constant $c$, depending on the potential $V$, such that
 for all $k>0$
\begin{enumerate}
 \item [(i)] $|\delta(k)|\leq c \min\{ k,\frac 1 { k}\}$, in particular $\delta\in L^\infty((0,\infty))$. 
 \item [(ii)] $\delta'\in L^\infty((0,\infty))$,
 \item [(iii)] $|\delta''(k)|\leq  \frac c {k}$.
\end{enumerate}
Moreover, 
\begin{enumerate}
 \item[(iv)] we have the following expansion of the phase shift
   \begin{equation}
    \delta(\sqrt{\mu_n})=\delta(\sqrt{\lambda_n}) - \frac {\delta'(\sqrt{\lambda_n})\delta(\sqrt{\lambda_n})}  L + \frac {F(\sqrt{\lambda_n})} {L^2},
   \end{equation}
  where the remainder term obeys for $x>0$
   \begin{equation}\label{eq:3}
    \big|F(x)\big|\leq c\Bigparens{\frac 1 {x} +1}
   \end{equation}
   for some constant $c$ depending on the potential $V$. 
\end{enumerate}
\end{lemma}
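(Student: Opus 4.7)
The plan is to represent $\delta(k)$ through the Calogero variable phase equation
\begin{equation*}
\partial_{r}\delta(k,r)=-\tfrac{1}{k}V(r)\sin^{2}\bigl(kr+\delta(k,r)\bigr),\qquad \delta(k,0)=0,
\end{equation*}
with $\delta(k):=\lim_{r\to\infty}\delta(k,r)$; equivalently, one could work with the Dirichlet Jost function $f(k,0)$ and the relation $\delta(k)=-\arg f(k,0)$ coming from the Volterra representation of the Jost solution. Under the assumptions $V\geq 0$ and $\int_{0}^{\infty}V(x)(1+x^{2})\dx<\infty$, the half-line Dirichlet problem admits no zero-energy resonance, so $f(0,0)\neq 0$. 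Standard iteration of the Volterra equation together with its first two $k$-derivatives then gives $C^{2}$-regularity of $f(\cdot,0)$, hence of $\delta$, on $[0,\infty)$.

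For (i)--(iii), I would run Born-type estimates on the integrated Calogero equation. Splitting the $r$-integral at $r=1/k$ and applying $\sin^{2}(kr+\delta)\leq\min\{1,(kr+\delta)^{2}\}$ together with the second moment $\int x^{2}V\dx<\infty$ yields, via a short bootstrap, $|\delta(k)|\leq ck$ for $k\leq 1$, while the trivial bound $|\delta(k,r)|\leq k^{-1}\int V\dx$ gives $|\delta(k)|\leq c/k$ for $k\geq 1$. Repeating the scheme on the first and second $k$-derivatives of the Calogero equation (each differentiation produces one additional factor of $x$ in the integrand, absorbed by the moment assumption) delivers (ii) and (iii); the $1/k$ singularity in (iii) is inherited from the second-order differentiation of the explicit $1/k$ prefactor in the equation.

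For (iv), I would combine Taylor's theorem with Lemma~\ref{Le:finite-size1}. Write $\Delta_{n}:=\sqrt{\mu_{n}}-\sqrt{\lambda_{n}}$. Substituting $\delta(\sqrt{\mu_{n}})=\delta(\sqrt{\lambda_{n}})+\Oh(\Delta_{n})$ (from (ii)) into Lemma~\ref{Le:finite-size1} and iterating once yields
\begin{equation*}
\Delta_{n}=-\frac{\delta(\sqrt{\lambda_{n}})}{L}+\frac{\wtilde R_{n}}{L^{2}},\qquad \sup_{n}|\wtilde R_{n}|\leq c.
\end{equation*}
Taylor's formula applied to $\delta$ at $\sqrt{\lambda_{n}}$ then produces the claimed $1/L$-coefficient $-\delta'(\sqrt{\lambda_{n}})\delta(\sqrt{\lambda_{n}})$ and lumps the rest into
\begin{equation*}
F(\sqrt{\lambda_{n}})=\delta'(\sqrt{\lambda_{n}})\wtilde R_{n}+\tfrac{1}{2}\delta''(\xi_{n})L^{2}\Delta_{n}^{2},
\end{equation*}
for some $\xi_{n}$ between $\sqrt{\mu_{n}}$ and $\sqrt{\lambda_{n}}$.

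The main obstacle will be the uniform bound $|F(x)|\leq c(1/x+1)$. The first summand is controlled by $\|\delta'\|_{\infty}\sup_{n}|\wtilde R_{n}|$, accounting for the constant part. For the quadratic Taylor remainder, (i) gives $L^{2}\Delta_{n}^{2}=\delta^{2}(\sqrt{\lambda_{n}})+\Oh(1/L)$, which is bounded. The delicate point is that $|\Delta_{n}|$ may be comparable to $\sqrt{\lambda_{n}}$ for small $n$, so one must check that $\xi_{n}/\sqrt{\lambda_{n}}$ stays bounded away from $0$ uniformly in $n$; this follows from Lemma~\ref{Le:finite-size1} and (i), since $|\Delta_{n}|=\Oh(\sqrt{\lambda_{n}}/L)$. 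Consequently (iii) yields $|\delta''(\xi_{n})|\leq c/\xi_{n}\leq c/\sqrt{\lambda_{n}}=c/x$, and combining with $\delta^{2}(\sqrt{\lambda_{n}})\leq c$ from (i) produces precisely the $c/x$ singularity allowed by the statement.
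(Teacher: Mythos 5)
Your proposal is correct and follows essentially the same route as the paper: the Calogero variable-phase equation you start from is exactly the paper's Pr\"ufer-angle ODE (with $\theta_k=kx+\delta_k$), the bounds (i)--(iii) are obtained by the same splitting $\sin^2\theta\le\min\{1,k^2t^2\}$ and by differentiating that equation in $k$ (the paper makes explicit the integrating factor $\rho^2(t)/\rho^2(x)$, bounded via $\exp(\|(\dotid)V\|_1)$, which your ``repeat the scheme on the differentiated equation'' step implicitly requires), and (iv) is the same Taylor expansion combined with one iteration of Lemma~\ref{Le:finite-size1}, with the $1/x$ in \eqref{eq:3} coming from $|\delta''(\xi_n)|\le c/\xi_n$ and $\xi_n\ge\sqrt{\lambda_n}$ exactly as you argue.
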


Lemma \ref{Le:finite-size1} and \ref{Le:finite-size3} are well known to experts in the theory of Sturm-Liouville operators. 
Unfortunately, we did not find a precise reference. For convenience, we prove both results in Appendix \ref{finite-size20}.
The third ingredient to the proof of Theorem \ref{Thm:finite-size} is
the following:
  
\begin{lemma}\label{Le:finite-size2}{(Euler-MacLaurin)}
\begin{enumerate}
 \item [(i)] Let $f\in C^1((0,\infty))$ then 
\begin{equation}
 \frac 1 L \sum_{n= 1}^{ N} f\left(\frac n L\right)= \int_0^{\frac N L} \d x\, f(x) + \Oh\Bigparens{\frac {N} {L^2}}\norm{f'}_{L^\infty\left((0,\frac N L)\right)}.
\end{equation}
 \item [(ii)] Let $f\in C^2((0,\infty))$ with $f''\in L^\infty\left((0,\infty)\right)$ then 
\begin{equation}
 \frac 1 L \sum_{n= 1}^{ N} f\left(\frac n L\right)= \int_0^{\frac N L} \d x\, f(x)+ \frac 1 {2L} \int_0^{\frac N L} \d x\, f'(x) + 
 \Oh\Bigparens{\frac N {L^3}}.
\end{equation}
\end{enumerate}
\end{lemma}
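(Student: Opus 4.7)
The plan is to prove both parts by partitioning $[0,N/L]$ into the $N$ cells $I_n:=[(n-1)/L,n/L]$ of length $1/L$ and comparing, cell by cell, the Riemann-sum contribution $f(n/L)/L$ with the integral $\int_{I_n}f(x)\,\d x$ via a Taylor expansion around $n/L$ of appropriate order.

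For part (i), I would write
\begin{equation*}
  \frac{f(n/L)}{L}-\int_{I_n}f(x)\,\d x=\int_{I_n}\bigl(f(n/L)-f(x)\bigr)\,\d x=\int_{I_n}\int_x^{n/L}f'(t)\,\d t\,\d x,
\end{equation*}
so that the cellwise error is bounded by $\tfrac{1}{2L^2}\norm{f'}_{L^\infty((0,N/L))}$. Summing over $n=1,\dots,N$ yields the claimed total error $\Oh\Bigparens{\frac{N}{L^2}}\norm{f'}_{L^\infty((0,N/L))}$.

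For part (ii), I push the Taylor expansion one order further: on $I_n$,
\begin{equation*}
  f(x)=f(n/L)+f'(n/L)(x-n/L)+R_n(x),\qquad \abs{R_n(x)}\le\tfrac{1}{2}\norm{f''}_\infty(x-n/L)^2,
\end{equation*}
and $\int_{I_n}(x-n/L)\,\d x=-1/(2L^2)$. Integrating this expansion over $I_n$ and summing over $n$ produces
\begin{equation*}
  \frac{1}{L}\sum_{n=1}^N f(n/L)=\int_0^{N/L}f(x)\,\d x+\frac{1}{2L^2}\sum_{n=1}^N f'(n/L)+\Oh\Bigparens{\frac{N}{L^3}}.
\end{equation*}
The finite sum on the right equals $\tfrac{1}{2L}$ times a Riemann sum for $f'$, so I apply part (i) to $f'$—which is legitimate since $f''\in L^\infty$—to replace it by $\tfrac{1}{2L}\int_0^{N/L}f'(x)\,\d x$ at the cost of an additional error $\tfrac{1}{2L}\cdot\Oh\Bigparens{\frac{N}{L^2}}\norm{f''}_\infty=\Oh\Bigparens{\frac{N}{L^3}}$.

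No step is genuinely difficult; the whole argument is elementary. The only point requiring attention is to check that the two sources of error—the cellwise quadratic Taylor remainder, which contributes at most $N\cdot\Oh(L^{-3})\norm{f''}_\infty$ after summation, and the discretisation error from replacing $\tfrac{1}{L}\sum_n f'(n/L)$ by $\int_0^{N/L} f'(x)\,\d x$ via part (i)—are both of the same order $\Oh(N/L^3)$, so they combine cleanly into the stated remainder.
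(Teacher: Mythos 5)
Your proof is correct: the cell-by-cell Taylor comparison gives exactly the stated error bounds in both parts, and the bootstrap of part (i) applied to $f'$ cleanly handles the Riemann sum $\frac{1}{2L^2}\sum_n f'(n/L)$ in part (ii). The paper itself offers no proof of this lemma (it is declared elementary with a reference to Knopp), and your argument is precisely the standard one that fills that gap.
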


The proof of this lemma is elementary, see also \cite[Chapter  XIV]{zbMATH00861508}.

\begin{proof}[proof of Theorem \ref{Thm:finite-size}]
Using Lemma \ref{Le:finite-size1}, 
 we obtain
\begin{align}
 \sum_{n=1}^N \big(\mu_n-\lambda_n\big)  = 
 \sum_{n=1}^N \Bigparens{-\frac {2\sqrt{\lambda_n}\delta(\sqrt{\mu_n})} L + \frac{\delta^2(\sqrt{\mu_n})}{L^2}}+
 \oh\Bigparens{\frac N {L^2}}\label{equation11}
 \end{align}
On the other hand 
Lemma \ref{Le:finite-size3}  (iv) provides
 \begin{align}
  \eqref{equation11}=& \sum_{n=1}^N  \Bigparens{-\frac{2\delta(\sqrt{\lambda_n})\sqrt{\lambda_n}} L +  \frac {2\delta'(\sqrt{\lambda_n})\delta(\sqrt{\lambda_n})\sqrt{\lambda_n}} {L^2}
     + \frac{\delta^2(\sqrt{\lambda_n})}{L^2}}\notag\\
                    &+ \frac 1 {L^3}  \sum_{n=1}^N G\big(\sqrt{\lambda_n}\big) + \oh\bigparens{\frac N {L^2}},
 \end{align}
where
\begin{align}
 G(x)=  \Big(& -2\delta'(x)\delta^2(x) - 2x F(x) +\frac 1 L \big((\delta'(x)\delta(x))^2 + 2\delta(x)F(x)\big)\notag\\
	  &-\frac 2 {L^2} F(x)\delta'(x)\delta(x)+ \frac 1 {L^3} F^2(x)\Big).
\end{align}
Since $\lambda_n= \left( \frac{n\pi}{L} \right)^2$, $\frac N L\to  \frac{\sqrt E}\pi$, using Lemma \ref{Le:finite-size3} (i)-(iii) and \eqref{eq:3}, we obtain for the error

\begin{equation}\label{eq:error2}
  \frac 1 {L^3} \sum_{n=1}^N G\big(\sqrt{\lambda_n}\big)=\Oh\Bigparens{\frac 1 {L^2}}.
\end{equation}
Note that by Lemma  \ref{Le:finite-size3} the function $f:x\mapsto x\delta(x)$ fulfills the assumptions of Lemma \ref{Le:finite-size2} (ii).
Thus, we compute

\begin{align}
  \sum_{n=1}^N & -\frac{2\delta(\sqrt{\lambda_n})\sqrt{\lambda_n}} L  = - \frac 1 L \sum_{n= 1}^{ N } 2\delta\left(\frac {n\pi} L\right)\frac {n\pi} L\notag\\
  & =  -\int_0^{\frac N L} \d x\, 2 \delta(x\pi)(x\pi) - \frac 1 L \int_0^{\frac N L} \d x\, \left(\delta(x\pi)(x\pi)\right)'+ \Oh\Bigparens{\frac N {L^3}}\notag\\
  & = - \frac 1 \pi \int_0^{(\frac {N\pi} L)^2} \d x\,  \delta(\sqrt x) - \frac 1 L \delta(\sqrt E)\sqrt E + \oh\Bigparens{\frac 1 {L}}\label{eq:finite-size1},
\end{align}
where we used in the last equality the convergence  $\frac N L\to \frac{\sqrt E}\pi$ and the continuity of $\delta$.
Using  Lemma  \ref{Le:finite-size3} we see that $g:x\mapsto x\delta(x)\delta'(x)$ satisfies the assumptions of Lemma \ref{Le:finite-size2} (i) with
$\norm{g'}_{L^\infty\left((0,\frac N L)\right)}\leq c(1+\frac N L)$. Therefore,
\begin{align}
   \sum_{n=1}^N &  \frac {2\delta'(\sqrt{\lambda_n})\delta(\sqrt{\lambda_n})\sqrt{\lambda_n}} {L^2} 
       =
   \frac 1 L\bigg(\frac 1 L  \sum_{n= 1 }^{ N }  2 \delta'\left(\frac {n\pi} L\right)\delta\left(\frac {n\pi} L \right) \frac {n\pi} L  \bigg)\notag\\ 
      & =
  \frac 1 {L} \int_0^{\frac N L} \d x\,  2\delta'(x\pi)\delta(x\pi) (x\pi) + \Oh\Bigparens{\frac N {L^3}}\Bigparens{1+\frac N L}\notag \\
      & =
   \frac 1 {L\pi} \Bigparens{ \delta^2(\sqrt E) \sqrt E - \int_0^{\frac N L}\d x\, \delta^2(x\pi)\pi} + \oh\Bigparens{\frac 1 {L}}\label{eq:finite-size2},
\end{align}
where we used integration by parts, the convergence $\frac N L\to \frac{\sqrt E}\pi$ and the continuity of $\delta$ in the last line.
Lemma \ref{Le:finite-size3} yields the assumptions of Lemma \ref{Le:finite-size2} (i) for $h:x\mapsto\delta^2(x)$ with $h'\in L^\infty\left((0,\infty)\right)$. 
Thus,
\begin{align}
  \sum_{n=1}^N  \frac{\delta^2(\sqrt{\lambda_n})}{L^2} 
      & =
   \frac 1 L \bigg( \frac 1 L \sum_{n= 1 }^{N} \delta^2\left(\frac {n\pi} L\right) \bigg)\notag\\
      & =
   \frac 1 {L} \int_0^{\frac N L}\dx\, \delta^2( x\pi) + \Oh\Bigparens{ \frac 1 {L^2}}\label{eq:finite-size3}.
\end{align}
Summing up \eqref{eq:finite-size1}, \eqref{eq:finite-size2}, \eqref{eq:finite-size3} and equations \eqref{equation11}, \eqref{eq:error2} give the claim. 
\end{proof}

\appendix
\section{Pr\"ufer variables and the phase shift}\label{finite-size6}

Let $k>0$. We consider the eigenvalue problem on $(0,\infty)$
    \begin{equation}\label{eq:finite-size}
      -u''+ Vu= k^2 u, \qquad u(0)=0.
    \end{equation}
    Introducing Pr\"ufer variables 
    \begin{equation}\label{prufer}
     u(x)= \rho_u(x) \sin(\theta_k(x)) \qquad u'(x)=k \rho_u(x)\cos(\theta_k(x)),
    \end{equation}
    \eqref{eq:finite-size} is equivalent to the system 
    \begin{align}
      \theta_k'&=k - \frac 1 {k} V\sin^2(\theta_k),\qquad \theta_k(0)=0,\label{equation6} \\
      \rho_u'&=\frac {V\sin(2\theta_k)} {2k} \rho_u\label{equation7},
    \end{align}
see e.g. \cite[Sec. 14.4]{Weidmann}. 
We call $\theta_k$ the Pr\"ufer angle.
Using the Banach fixed-point theorem, there exist absolutely continuous solutions $\theta_k$ and $\rho_u>0$ of \eqref{equation6}
and \eqref{equation7}. Moreover, the solution $\theta_k$ is unique.  
We denote by
\begin{equation}\label{pr-ph}
\delta_k(x):= \theta_k(x) - k x, \quad\text{where}\ k,x>0
\end{equation}
 the phase shift function,
Then, for $k>0$ the scattering phase shift $\delta$ is defined by
\begin{equation}\label{phaseshift2}
 \lim_{x\to\infty}\delta_k(x):=\delta(k).
\end{equation}
Therefore, 
integrating \eqref{equation6} implies
\begin{equation}\label{phaseshift2}
 \delta(k)=-\frac 1 {k} \int_0^\infty\d t\, V(t)\sin^2(\theta_k(t)). 
\end{equation}

The non-linear ODE \eqref{equation6}
is sometimes called the variable-phase equation,
 see e.g. \cite{calogero1967variable} or \cite[Thm. XI.54]{MR529429}.
We did not choose the standard Pr\"ufer variables. But with the choice \eqref{prufer} it is 
particularly easy to compare the Pr\"ufer angle with the phase-shift function and in turn with the
phase shift. This was also used in \cite{KLS}.
We continue with some elementary properties of the Pr\"ufer angle, respectively of the phase-shift function for perturbations $V\ge 0$.

\begin{proposition}\label{Pr:finite-size1}
 Let $V\ge 0$, $k>0$ and fix $x>0$. Then,
 \begin{enumerate}
  \item [(i)] $\theta_k(x)$ is non-negative, moreover,
               \begin{equation}\label{pr:eq1} 
                 0  \leq \theta_k(x) \leq  k x,
               \end{equation}
  \item [(ii)] we have
	       \begin{equation}\label{pr:eq2}
               \lim_{k\to 0} \theta_k(x)=0, \qquad \lim_{k\to\infty} \theta_k(x)=\infty,
               \end{equation}
   \item [(iii)] the functions $k\mapsto\theta_k(x)$ and $k\mapsto\delta_k(x)$ are twice differentiable, i.e.
               \begin{equation}\label{pr:eq3}
                \theta_{(\,\cdot\,)}(x),\ \delta_{(\,\cdot\,)}(x)\in C^2((0,\infty)),
               \end{equation}                          
               where $\frac{\partial}{\partial k} \theta_{k}$ 
               satisfies
               \begin{align}\label{pr:eq4}
                &\frac {\partial}{\partial k}\theta_k(x)=  \int_0^x \d t\, 
                \frac {\rho^2(t)} {\rho^2(x)}\Bigparens{1 + \frac{V(t)\sin^2(\theta_k(t))}{ k^2}}\geq 0.
               \end{align}
 \end{enumerate}
\end{proposition}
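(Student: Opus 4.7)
For part (i), the upper bound $\theta_k(x) \leq kx$ is an immediate comparison consequence of $\theta_k' \leq k$ (valid since $V \geq 0$ and $\sin^2 \leq 1$) together with $\theta_k(0) = 0$. For the lower bound $\theta_k \geq 0$, the mechanism is that \eqref{equation6} forces $\theta_k' = k > 0$ whenever $\theta_k$ vanishes. More precisely, if $\theta_k(x_0) < 0$ for some $x_0 > 0$, continuity and $\theta_k(0) = 0$ produce a last preceding point $x_1 \in [0, x_0)$ with $\theta_k(x_1) = 0$, and near $x_1$ the integrated form
\[
\theta_k(x_1 + h) = kh - \frac{1}{k}\int_{x_1}^{x_1+h} V(t)\sin^2(\theta_k(t))\,dt
\]
gives $\theta_k(x_1 + h) \geq kh - o(h)$ as $h \downarrow 0$ (using $\sin^2(\theta_k(t)) \leq \theta_k^2(t) \to 0$ and the integrability of $V$), contradicting $\theta_k < 0$ immediately to the right of $x_1$.

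For part (ii), the case $k \to 0$ follows from the sandwich in (i). For $k \to \infty$, integrating \eqref{equation6} gives
\[
\theta_k(x) = kx - \frac{1}{k}\int_0^x V(t)\sin^2(\theta_k(t))\,dt,
\]
and the remainder is bounded in absolute value by $\frac{1}{k}\int_0^\infty V(t)\,dt$, which is finite by \eqref{assmp:V}; hence $\theta_k(x) \geq kx - C/k \to \infty$.

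For part (iii), smoothness follows from standard dependence-on-parameters theory for the Volterra integral equation equivalent to \eqref{equation6}, whose right-hand side $k - V(x)\sin^2(\theta)/k$ is $C^\infty$ jointly in $(k, \theta) \in (0, \infty) \times \R$: this yields $C^2$ regularity of $k \mapsto \theta_k(x)$, and hence of $k \mapsto \delta_k(x) = \theta_k(x) - kx$. To derive \eqref{pr:eq4}, set $\phi_k := \partial_k \theta_k$ and differentiate \eqref{equation6} in $k$, producing the linear first-order ODE
\[
\phi_k' + \frac{V\sin(2\theta_k)}{k}\,\phi_k = 1 + \frac{V\sin^2(\theta_k)}{k^2}, \qquad \phi_k(0) = 0,
\]
the initial condition coming from $\theta_k(0) = 0$ for every $k > 0$. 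Equation \eqref{equation7} for the Pr\"ufer amplitude gives $(\rho^2)' = \tfrac{V\sin(2\theta_k)}{k}\rho^2$, which identifies $\rho^2$ as the integrating factor. Multiplying through by $\rho^2$, integrating from $0$ to $x$, and dividing by $\rho^2(x) > 0$ yields \eqref{pr:eq4}; non-negativity is manifest.

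The main subtlety is the lower bound in (i), since the right-hand side of \eqref{equation6} is of indefinite sign; the key observation is that at a vanishing point of $\theta_k$ the correction $-V\sin^2\theta_k/k$ is of lower order than the constant drift $+k$, which prevents $\theta_k$ from crossing into the negative half-line. The rest is routine ODE manipulation, with the only non-trivial step being the recognition of $\rho^2$ as the integrating factor for the linearized equation.
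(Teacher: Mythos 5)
Your proposal is correct and, for the upper bound in (i), for (ii), and for (iii) including the formula \eqref{pr:eq4}, it coincides with the paper's proof: the bound $\theta_k(x)\le kx$ is the comparison $\theta_k'\le k$ from $V\ge0$; the limit $k\to0$ comes from the sandwich \eqref{pr:eq1} and the limit $k\to\infty$ from integrating \eqref{equation6}; the $C^2$ dependence on $k$ is obtained in the paper from analyticity of the solution $u(\cdot,k)$ in $k$ via $\tan\theta_k=ku/u'$, whereas you invoke parameter dependence for the equivalent Volterra equation---both are standard and acceptable; and your integrating-factor derivation of \eqref{pr:eq4} is the paper's computation $\bigl(\rho^2\partial_k\theta\bigr)_x=\rho^2\bigl(1+V\sin^2(\theta)/k^2\bigr)$ phrased differently, since \eqref{equation7} identifies $\rho^2$ as $\exp\bigl(\int_0^x V\sin(2\theta_k)/k\bigr)$ up to a constant. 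The one genuinely different step is the lower bound $\theta_k\ge0$: the paper deduces it from the monotonicity $\partial_k\theta_k(x)\ge0$ supplied by \eqref{pr:eq4}, while you give a direct barrier argument at a putative zero crossing; your route has the merit of keeping (i) logically independent of (iii). A caveat on that step: the asserted $\oh(h)$ bound on the correction does not follow from ``$\sin^2(\theta_k(t))\to0$ and integrability of $V$'' alone, because $\int_{x_1}^{x_1+h}V$ need not be $\Oh(h)$ for a merely integrable $V$. You need the quadratic bound $\sin^2(\theta_k)\le\theta_k^2$ combined with an a priori two-sided estimate $\sup_{[x_1,x_1+h]}\abs{\theta_k}\le 2kh$; the latter requires a short bootstrap (set $M(h)=\sup_{[x_1,x_1+h]}\abs{\theta_k}$, note $M(h)\le kh+k^{-1}M(h)^2\int_{x_1}^{x_1+h}V$ and that both $M(h)$ and $\int_{x_1}^{x_1+h}V$ tend to $0$), after which the correction is $\Oh\bigl(h^2\int_{x_1}^{x_1+h}V\bigr)=\oh(h)$ and your contradiction closes. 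With that patch the argument is complete.
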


\begin{proof}[Proof of Proposition \ref{Pr:finite-size1}]
The first inequality in (i) follows from integrating equation \eqref{pr:eq4} with the initial condition $\theta_k(0)=0$ for all $k>0$.
The second inequality follows from $V\ge 0$ and the ODE \eqref{equation6}.

The first equality in (ii) is a consequence of (i). The second equality follows directly from the ODE.

Let $u(x,k)$ be a non-trivial solution of \eqref{eq:finite-size}.
Standard results provide that $u$ and $u'$ are analytic functions in the parameter $k$ \cite[Kor. 13.3]{Weidmann}. 
Note that $u$ and $u'$ do not have the same zeros. 
Since $\tan\big(\theta_k(x)= ku(x,k)/ u'(x,k)\big)$ for $u'(x,k)\neq 0$ and $\text{cotan}\,\big(\theta_k(x)\big) = u'(x,k)/(ku(x,k))$ 
for $u(x,k)\neq 0$, the properties \eqref{pr:eq3} follow from the analyticity of $u$. 
We compute
\begin{align}
 \Big(\rho^2\frac{\partial}{\partial k}\theta\Big)_x&= 
 2\rho\rho_x\frac{\partial}{\partial k}\theta +\rho^2\frac{\partial}{\partial k} \theta_{ x}\notag\\
&= 2\rho\rho_x\frac{\partial}{\partial k}\theta + \rho^2 \frac{\partial}{\partial k}\Bigparens{k- \frac{V\sin^2(\theta)}{k}}\notag\\
	&= 2\rho\rho_x\frac{\partial}{\partial k}\theta + \rho^2 \Bigparens{ 1 + \frac{V\sin^2(\theta)}{k^{2}}- \frac{V\sin(2\theta)}{ k} \frac{\partial}{\partial k} \theta}\notag\\
				&=\rho^2\Bigparens{ 1 + \frac{V\sin^2(\theta)}{k^{2}}}.							
\end{align}
Integrating the latter yields \eqref{pr:eq4}. This computation is adopted from \cite[Lem. 14.16]{Weidmann}.
\end{proof}

\begin{proof}[proof of Lemma \ref{Le:finite-size1}]\label{finite-size15}
Let $\mu>0$. Consider the eigenvalue equation on $[0,L]$
 \begin{equation}
      - u''+ Vu= \mu u, \qquad u(0)=0.
 \end{equation} 
We introduce Pr\"ufer variables according to \eqref{prufer}. 
Note that any eigenfunction $u$ of $h_L'^D$ corresponding to an eigenvalue $\mu$ has to fulfill $u(L)=0$ due to the Dirichlet boundary condition at $L$.
Thus, using $\rho_u(x)\neq 0$ for all $x\geq 0$,  we obtain 
 $\sin\big(\theta_{\sqrt\mu}(L)\big)=0$. With \eqref{pr:eq2} and \eqref{pr:eq4} this implies
for the $n$th eigenvalue $\mu_n$ of $h_L'^D$
\begin{equation}
 \theta_{\sqrt{\mu_n}}(L)=n\pi.
\end{equation}
Therefore, integrating \eqref{equation6} yields 
\begin{equation}
      \sqrt\mu_n=\frac{n\pi} L + \frac 1 {L\sqrt\mu_n } \int_0^L \d t\, V(t) \sin^2\big(\theta_{\sqrt{\mu_n}}(t)\big).
\end{equation}
Now, using $|\sin(x)|\leq |x|$, \eqref{pr:eq1}, $|\sin(x)|\leq 1$ and \eqref{assmp:V} we obtain
\begin{align}
  \frac{1}{\sqrt{\mu_n}}\int_L^\infty \d t\, V(t) \sin^2(\theta_{\sqrt{\mu_n}}(t))\leq &\int_L^\infty\d t\, V(t)t\notag\\
  \leq &\frac 1 {L} \int_L^\infty\d t\, t^2 V(t)= \oh\Bigparens{\frac 1 L}.
\end{align}
 Then, \eqref{phaseshift2} and $\sqrt{\lambda_n}=\frac{n\pi} L$ give the claim. 
\end{proof}

\begin{proof}[proof of Lemma \ref{Le:finite-size3}]\label{finite-size20}
Part (i) follows from \eqref{phaseshift2}, \eqref{pr:eq1} and \eqref{assmp:V}.

Equation \eqref{pr:eq3} implies 
\begin{equation}\label{equation10}
\frac {\partial}{\partial k}\theta_k(x)=  \int_0^x \d t\, \frac {\rho^2(t)} {\rho^2(x)}\Bigparens{1 + \frac{V(t)\sin^2(\theta_k(t))}{ k^2}}.
\end{equation}
The ODE \eqref{equation7}, \eqref{pr:eq1}, the elementary inequality $|\sin x|\leq |x|$ and  \eqref{assmp:V} imply
\begin{equation}\label{eq:1}
 \Big| \frac {\rho(t)}{\rho(x)} \Big|\leq \exp\Bigparens{\int_t^x\d s\ s V(s)}\leq \exp\big(\norm{(\,\cdot\,)V}_1\big)<\infty.
\end{equation}
From this, \eqref{pr:eq1} and $|\sin x|\leq |x|$ 
we infer the existence of a constant $c$ depending on the potential $V$ such that
\begin{equation}\label{eq:2}
\Big|\frac {\partial}{\partial k}\theta_k(x)\Big|\leq c\left(1+x\right).
\end{equation}
Then, the above, \eqref{pr:eq1} and dominated convergence provide $\delta\in C^1((0,\infty))$ with 
\begin{equation}
 \big|\delta'(k)\big| \leq  c \int_0^\infty \d t\, V(t)\big(1+t+t^2\big).
\end{equation}
The assumptions on the potential give the claim.

Using \eqref{pr:eq3}, we compute as in the proof of \eqref{pr:eq4}
\begin{equation}
\Bigparens{\rho^2\frac{\partial^2}{\partial k^2}\theta}_x =2\rho^2V\Bigparens{-\frac{\sin^2(\theta)}{k^3}+\frac{\sin(2\theta)\frac{\partial}{\partial k}\theta}{k^2}
-\frac{\cos(2\theta)(\frac{\partial}{\partial k}\theta)^2} k}.
\end{equation}
 Using \eqref{pr:eq1}, $|\sin x|\leq |x|$, \eqref{eq:1} and \eqref{eq:2}, we see
\begin{equation}\label{derivative}
 \Big|\frac {\partial^2}{\partial k^2}\theta_k(x)\Big|\leq \frac{\tilde c} {k},
\end{equation}
where $\tilde c$ depends on $V$.
Then dominated convergence yields $\delta\in C^2((0,\infty))$ and  \eqref{pr:eq1} and \eqref{derivative} provide
\begin{equation}
 \big|\delta''(k)\big| \leq \frac C {k} \int_0^\infty \d t\, V(t)\big(1+t+t^2\big)
\end{equation}
for some $C$ depending on the potential $V$. 
  
To prove (iv) we use
   Lemma \ref{Le:finite-size1}. Thus,
    \begin{equation}
     \sqrt{\mu_n}=\sqrt{\lambda_n}+\frac{\delta(\sqrt{\mu_n})} L + \oh\Bigparens{\frac 1 L},
    \end{equation}
    Since $\delta\in C^2((0,\infty))$ we compute for $x,y\in(0,\infty)$ with $y>x$ and $y=x+\frac{\delta(y)} L+ \oh\big(\frac 1 L\big)$ 
    \begin{align}
     \Bigl\lvert \delta\left(y\right)-\delta(x)+\frac{\delta'(x)\delta(x)} L\Bigr\rvert 
     &\leq  \Bigl\lvert \int_x^{y} \d t\, \int_x^t \d s \delta''(s)\Bigr\rvert+
	 |\delta'(x)|  \Bigl\lvert y- x + \frac{\delta(x)} L \Bigr\rvert\nonumber\\
	& \leq  \frac 1 {x} |y-x|^2+ \frac{\norm{\delta}_\infty } L \Big|\int_x^y\d t\, \delta'(t) + 
	\oh\Bigparens{\frac 1 L}\Big|.\label{equation5}
	 \end{align}
      Using Lemma \ref{Le:finite-size3} (ii)  and once again
     the recursion relation we obtain
      \begin{align}
	  \Bigl\lvert \delta\left(y\right)-\delta(x)+\frac{\delta'(x)\delta(x)} L\Bigr\rvert &\leq \Bigparens{\frac 1 {x}+1} \Oh\Bigparens{\frac 1 {L^2}}.
    \end{align}
    The claim follows from setting $x:=\lambda_n$ and $y:=\mu_n$.  
\end{proof}

\section*{Acknowledgement}
The author thanks Peter Otte and Wolfgang Spitzer for bringing up the problem and also Hubert Kalf and Peter M\"uller for helpful discussions.

\newcommand{\etalchar}[1]{$^{#1}$}
\newcommand{\noopsort}[1]{}

\end{document}